\newtheorem{theorem}{Theorem}
\newtheorem{lemma}{Lemma}
\newtheorem{exmp}{Example}
\newlength{\figurewidth}
\newlength{\smallfigurewidth}
\newcommand{\hanwei}[1]{{\color{black}{{#1}}}}
\newcommand\blfootnote[1]{%
	\begingroup
	\renewcommand\thefootnote{}\footnote{#1}%
	\addtocounter{footnote}{-1}%
	\endgroup
}
\begin{document}
	\graphicspath{{figures/}}	
	\title{Computing Similarity Queries for \\ Correlated Gaussian Sources}
	\author{
		\IEEEauthorblockN{Hanwei Wu, Qiwen Wang, and Markus Flierl}\\
		\IEEEauthorblockA{School of Electrical Engineering and Computer Science}\\
		\IEEEauthorblockA{KTH Royal Insititute of Technology}\\
		\IEEEauthorblockA{Stockholm, Sweden}\\
		\IEEEauthorblockA{Email: \{hanwei, qiwenw, mflierl\}@kth.se}
	}
	\maketitle
	\frenchspacing
	\begin{abstract}
Among many current data processing systems, the objectives are often not the reproduction of data, but to compute some answers based on the data resulting from queries. The similarity identification task is to identify the items in a database that are similar to a given query item for a given metric. The problem of compression for similarity identification has been studied in \cite{amir:14:isit}. Unlike classical compression problems, the focus is not on reconstructing the original data. Instead, the compression rate is determined by the desired reliability of the answers. Specifically, the information measure \emph{identification rate} characterizes the minimum rate that can be achieved among all schemes which guarantee reliable answers with respect to a given similarity threshold. In this paper, we propose a component-based model for computing correlated similarity queries. The correlated signals are first decorrelated by the Karhunen-Lo\`{e}ve transform (KLT). Then, the decorrelated signal is processed by a distinct D-admissible system for each component. We show that the component-based model equipped with KLT  can perfectly represent the multivariate Gaussian similarity queries when optimal rate-similarity allocation applies. Hence, we can derive the \emph{identification rate} of the multivariate Gaussian signals based on the component-based model. We then extend the result to general Gaussian sources with memory. We also study the models equipped with practical component systems. We use TC-$\triangle$ schemes that use type covering signatures and triangle-inequality decision rules \cite{amir:14:isit} as our component systems. We propose an iterative method to numerically approximate the minimum achievable rate of the TC-$\triangle$ scheme. We show that our component-based model equipped with TC-$\triangle$ schemes can achieve better performance than the TC-$\triangle$ scheme unaided on handling the multivariate Gaussian sources.
	\end{abstract}
	\vspace{5ex}
	\begin{IEEEkeywords}
		Similarity identification, transform schemes, bit allocation, D-admissible systems, identification rate-similarity function.
	\end{IEEEkeywords}
	\section{Introduction} 
	\label{sec:SET}
	\blfootnote{Part of the content of Section III is submitted to 2018 IEEE Asilomar Conference on Signals, Systems, and Computers.}The problem of efficient identification and data retrieval from large databases has become more relevant in recent years. Similarity identification requires that a database returns all data items which are similar to a given query \hanwei{under a similarity threshold specified by the problem}. The notion of similarity is often defined by a specific metric measure, such as the Euclidean distance or the Hamming distance. It is required that false negative errors are not permitted in the retrieval process as they cannot be detected by further processing. This is important for some applications, such as security cameras and criminal forensic databases. On the other hand, although false positive errors can be detected by further verification, they increase the computational cost on the server side, and hence, reduce efficiency. Therefore, the tradeoff between the compression rate and the reliability of the answers to a given query is of interest.
	
	The problem of similarity identification of compressed data was first studied in \cite{ahlswede:99:it} from an information-theoretic viewpoint. In \hanwei{\cite{ahlswede:99:it}}, both false positive and false negative errors are allowed, as long as the error probability vanishes with the data block-length. Our setting though is closely related to the problem of compression for similarity queries as introduced in \cite{Amir:15:it}, \cite{amir13}. In \cite{Amir:15:it}, \cite{amir13} and this work, {\itshape false negative} errors are not permitted. \cite{Amir:15:it}, \cite{amir13} study the problem from an information-theoretic viewpoint and introduce the term {\itshape identification rate}. It characterizes the minimum compression rate that \hanwei{guarantees} query answers with a vanishing false positive probability, while false negative errors are not allowed. \cite{Amir:15:it}, \cite{amir13} provide the identification rate for Gaussian sources with quadratic distortion and for binary sources with \hanwei{the} Hamming distance. \hanwei{In \cite{Amir:15:it}, it is also proved that}, similar to the classical compression, the Gaussian source requires the largest compression rate among sources with the same variance.
	
	Since it is common to encounter correlated data in the real world, it is of interest to investigate similarity identification schemes for correlated sources. \cite{OCHOA:dcc:14} uses lossy compression as a building block to construct the TC-$\triangle$ (Type Covering signatures and triangle-inequality decision rule) scheme and the LC-$\triangle$ (Lossy Compression signatures and triangle-inequality decision rule) scheme. The LC-$\triangle$ scheme only optimizes the quantization distortion and can be achieved by employing a rate-distortion code on the triangle-inequality principle. The TC-$\triangle$ scheme is an improved version of the LC-$\triangle$ scheme by optimizing jointly the quantization distortion and the expected query codeword distance. The results in \cite{OCHOA:dcc:14} show that the compression rate of TC-$\triangle$ can achieve the identification rate for the case with binary sources and the Hamming distance. 
	
	In \cite{steiner:16:it}, the authors present a shape-gain quantizer for i.i.d. Gaussian sequences: scalar quantization is applied to the magnitude of the data vector. The shape (the projection on the unit sphere) is quantized using a wrapped spherical code \cite{hamkins:97:it}. \cite{hanwei:dcc:17} proposes tree-structured vector quantizers that hierarchically cluster the data using $k$-center clustering. In \cite{hanwei:17:asilomar}, the authors compare two transform-based similarity identification schemes to cope with exponentially growing codebooks for high-dimensional data. One of the proposed schemes, that is, the component-based approach, exhibits both good performance and low search complexity. However, the theoretical analysis for the component-based setting is still an open problem and remains to be investigated. Besides, for correlated sources, no analytical results on the minimum achievable rates of the above schemes are provided.
	
	In this paper, \hanwei{we first propose a component-based model for computing the identification rate for multivariate Gaussian sources. We use the Karhunen-Lo\`{e}ve transform (KLT) to create independent $D$-admissible component systems. We show that the component-based model equipped with KLT can perfectly represent the multivariate Gaussian similarity queries when optimal rate-similarity allocation applies. We then extend the result to the identification rate of the general Gaussian sources with memory. To evaluate the performance of practical schemes, we replace the optimal component system with the state-of-the-art TC-$\triangle$ schemes. We propose an iterative method inspired by the Blahut–Arimoto (BA) \cite{blahut:72:it}, \cite{arimoto:72:it} and related algorithms \cite{walsh:15:sp} to numerically approximate the minimum achievable rate of TC-$\triangle$ schemes. The simulation shows that our component-based model equipped with TC-$\triangle$ schemes has better performance than the TC-$\triangle$ scheme unaided on handling the multivariate Gaussian sources.}
		
	The outline of this paper is as follows: In Section $2$, we give a brief description of the problem's background and key concepts \footnote{We follow the problem setup and adopt most notations in \cite{Amir:15:it} and \cite{amir13}. Therefore, we refer to \cite{Amir:15:it} and \cite{amir13}} for more detailed background and problem description. \hanwei{In Section $3$, we propose our component-based model for computing the identification rate of multivariate Gaussian sources. In Section $4$, we extend the identification rate result for general Gaussian sources with memory.} In Section $5$, we propose an iterative method to approximate the \hanwei{minimum achievable rate} of the TC-$\triangle$ scheme. Then we compare the TC-$\triangle$ scheme with the component-based scheme for i.i.d. and multivariate Gaussian sources. The conclusions are given in Section $6$. 
	
		{\itshape The notational conventions in this work are as follows. Uppercase nonboldface symbols such as $X$ are used to denote random variables; and lowercase nonboldface symbols such as $x$ are used to denote sample values of those random variables. Vectors and matrices of random variables or their sample values are denoted by boldface symbols. For example, $\mathbf{X}$ and $\mathbf{x}$ are vectors (or sometimes matrices from the context) of random variables $X$ and its sample values $x$, respectively. The $i$th entry of a vector $\mathbf{X}$ is denoted by $X_i$.}
	\section{Quadratic Similarity Queries}
	Let $\mathbf{y} = (y_1,y_2,..,y_n)^T$ denote the query sequence and $\mathbf{x} = (x_1,x_2,..,x_n)^T$ the data sequence. A rate-\hanwei{$R_{\text{ID}}$} identification scheme $(T, g)$ consists of a \emph{signature assignment} function: $T: \mathbb{R}^n\rightarrow \{1,2,\cdots, 2^{nR_{\text{ID}}}\}$, and a {\itshape query function} $g$:$\{1,2,\cdots, 2^{nR_{\text{ID}}}\} \times \mathbb{R}^n \rightarrow\{\text{no}, \text{maybe}\}$. The database keeps only a short signature $T(\mathbf{x})$ for each $\mathbf{x}$. And the output decision $\emph{maybe}$ or $\emph{no}$ of a query function indicates whether $\mathbf{x}$ and $\mathbf{y}$ are probably $D_{\text{ID}}$-similar or not. The sequences $\mathbf{x}$ and $\mathbf{y}$ are called $D_{\text{ID}}$-$\emph{similar}$ if $d(\mathbf{x}, \mathbf{y}) \leq D_{\text{ID}}$, where  \hanwei{we restrict our consideration to \emph{additive distortion measures}}
	\begin{equation}
		d(\mathbf{x}, \mathbf{y}) \triangleq \frac{1}{n}\sum_{i = 1}^n \rho(x_i, y_i),
	\end{equation}
	$\rho$ is an arbitrary per-letter distance measure \hanwei{specified by the problem}, and $D_{\text{ID}}$ is the {\itshape similarity threshold}. Specifically, the quadratic similarity is 
		\begin{equation}
			d(\mathbf{x}, \mathbf{y}) \triangleq \frac{1}{n}\|\mathbf{x}-\mathbf{y}\|^2 =  \frac{1}{n}\sum_{i = 1}^n (x_i - y_i)^2,
	\end{equation}
	where $\|\cdot\|$ is the standard Euclidean norm.
	
	A similarity query retrieves all data items that are $D_{\text{ID}}$-similar \hanwei{to the query sequence}. A scheme is called $D_{\text{ID}}$-{\itshape admissible} if we obtain $g( T(\mathbf{x}), \mathbf{y} )=$ maybe for any pair of data item and query $(\mathbf{x}, \mathbf{y})$ which is $D_{\text{ID}}$-$\emph{similar}$. 
	
	Now, consider a probabilistic model for database and query. The objective is to design $D_{\text{ID}}$-admissible schemes that minimize the probability of the output \emph{maybe} for given distributions of database vectors $\mathbf{X}$ and query vectors $\mathbf{Y}$. According to \cite{Amir:15:it}, for a $D_{\text{ID}}$-admissible scheme, this probability is calculated as
	\begin{equation}
		\label{equ:1}
		\begin{split}
			\hspace*{-15pt}&\Pr \left\lbrace g( T(\mathbf{X}), \mathbf{Y} ) = \text{maybe} \right\rbrace\\
			&= \Pr \left\lbrace g(T(\mathbf{X}),\mathbf{Y}) = \text{maybe} | d(\mathbf{X,Y})\leq D_{\text{ID}}\right\rbrace \\
			&\hspace*{12pt}\Pr \left\lbrace d(\mathbf{X,Y})\leq D_{\text{ID}} \right\rbrace \\
			&+ \Pr \left\lbrace g(T(\mathbf{X}),\mathbf{Y}) = \text{maybe} , d(\mathbf{X,Y})> D_{\text{ID}}\right\rbrace\\
			&= \Pr \left\lbrace d(\mathbf{X,Y})\leq D_{\text{ID}}\right\rbrace + \Pr(\varepsilon),
		\end{split}
	\end{equation}
	where the second equality follows from $ \Pr \left\lbrace g( T(\mathbf{X}), \mathbf{Y} ) = \text{maybe} | d(\mathbf{X,Y})\leq D_{\text{ID}}\right\rbrace =1$ by the requirement of $D_{\text{ID}}$-admissibility. Hence, minimizing (\ref{equ:1}) is equivalent to minimizing the probability of false positives $\Pr(\varepsilon)$. That is, the probability $\Pr \left\lbrace g(T(\mathbf{X}), \mathbf{Y}) = \text{maybe}\right\rbrace$ can be used as a performance measure for the investigated schemes. In the following, we use the abbreviation $\Pr \left\lbrace\text{maybe}\right\rbrace$ for the probability that a scheme outputs maybe.
	
	For given distributions $P_X$ and $P_Y$ and a similarity threshold $D_{\text{ID}}$, a rate $R$ is said to be $D_{\text{ID}}$-{\itshape achievable} if there exist a sequence of $D_{\text{ID}}$-admissible schemes $\left(T^{(n)}, g^{(n)}\right)$ that can achieve a vanishing $\Pr\{\text{maybe}\}$ as $n$ approaches infinity: 
	\begin{equation}
		\underset{n\rightarrow \infty}{\lim} \Pr\{g^{(n)}(T^{(n)}(\mathbf{X}),\mathbf{Y}) = \text{maybe}\} = 0.
	\end{equation}
	The {\itshape identification rate} $R_{\text{ID}}^*$ of the source is defined as the infimum of all $D_{\text{ID}}$-{\itshape achievable} rates.
	\hanwei{
	\section{Identification Rate of \\ Multivariate Gaussian Sources} 
	\label{sec:MWG}
	\subsection{A Component-based Model}
	 We propose a component-based model to compute the identification rate for multivariate Gaussian sources. The idea is that the input which consists of $M$th order multivariate Gaussian signals is first decorrelated into $M$ components by the Karhunen-Lo\`{e}ve transform (KLT) for further processing. After the transform, we use a $D_{\text{ID}}^{(m)}$-admissible system for each component and they together form an $M$-component system. The $m$th component system answers $\emph{maybe}$ if the transformed $m$th query-database pair $\left(\mathbf{x}^{(m)}, \mathbf{y}^{(m)}\right)$ satisfies $d\left(\mathbf{x}^{(m)}, \mathbf{y}^{(m)}\right) \leq D_{\text{ID}}^{(m)}$.
	 
	 We consider an $M$th order zero-mean stationary Gaussian process 
	 \begin{equation}
	 \label{eq:multigaussian}
	 f_{\mathbf{\tilde{X}}}(\mathbf{\tilde{x}}) = \frac{1}{(2\pi)^{N/2}|\mathbf{C}_M|^{1/2} e^{-\frac{1}{2}\mathbf{\tilde{x}}^T\mathbf{C}_M^{-1}\mathbf{\tilde{x}}}},
	 \end{equation}
	 where $\mathbf{\tilde{x}}$ is a vector of $M$ consecutive samples and $\mathbf{C}_M$ is the $M$th order autocovariance matrix.  Since $\mathbf{C}_M$ is a real symmetric matrix, it has the eigendecomposition as
	 \begin{equation}
	 \mathbf{C}_M = \mathbf{A}_M\mathbf{\Sigma}_M\mathbf{A}_M^T,
	 \end{equation}
	 where $\mathbf{A}_M = \left[\mathbf{e}_M^{(1)}, \mathbf{e}_M^{(2)},\cdots, \mathbf{e}_M^{(M)}\right]$ is the matrix whose columns are the eigenvectors of $\mathbf{C}_M$, $\mathbf{\Sigma}_M$ is a diagonal matrix with eigenvalues $\xi_M^{(m)}$ as diagonal entries. The source can be decorrelated by the transform $\mathbf{x} = \mathbf{A}_M^T\mathbf{\tilde{x}}$, that is
	 \begin{equation}
	 \mathbb{E}[\mathbf{X}\mathbf{X}^T] = \mathbf{A}_M^T\mathbb{E}[\mathbf{X}\mathbf{X}^T]\mathbf{A}_M = \mathbf{A}_M^T \mathbf{C}_M\mathbf{A}_M = \mathbf{\Sigma}_M.
	 \end{equation}
	 We denote $\mathbf{Q}$ as the transform we use for the component-based model such as $\mathbf{x} = \mathbf{Q}\mathbf{\tilde{x}}$, where $\mathbf{\tilde{x}}$ is the input signal.
	 Then the transpose of the eigenmatrix $\mathbf{A}_M^T$ is the Karhunen-Lo\`{e}ve transform (KLT).

      Let $(T^*, g^*)$ denote an optimal identification system for multivariate Gaussian sources, that is, $(T^*, g^*)$ can achieve the identification rate of multivariate Gaussian sources.  In next two sections, we derive conditions that preserve the $D_{\text{ID}}$-admissibility and $D_{\text{ID}}$-achievability of the optimal identification system  $(T^*, g^*)$ for the $M$-component model. 
	\subsection{$D_{\text{ID}}$-admissible Condition}
	Maintaining the $D_{\text{ID}}$-admissibility after the transform requires that the similarity measure of the origin domain is persevered in the transform domain. Since $\mathbf{Q} = \mathbf{A}_M^T$ is an orthogonal matrix, the KLT  is an orthogonal transform $\mathbf{Q}^T\mathbf{Q} = \mathbf{I}$ and preserves the quadratic distance
	\begin{align}
	d(\mathbf{x}, \mathbf{y}) &= \frac{1}{M}(\mathbf{x}-\mathbf{y})^T(\mathbf{x}-\mathbf{y}) \\
	&=\frac{1}{M}(\mathbf{Q}\mathbf{\tilde{x}}-\mathbf{Q}\mathbf{\tilde{y}})^T(\mathbf{Q}\mathbf{\tilde{x}}-\mathbf{Q}\mathbf{\tilde{y}}) \\
	&= \frac{1}{M}(\mathbf{\tilde{x}}-\mathbf{\tilde{y}})^T\mathbf{Q}^T\mathbf{Q}(\mathbf{\tilde{x}}-\mathbf{\tilde{y}}) \\
	&= \frac{1}{M}(\mathbf{\tilde{x}}-\mathbf{\tilde{y}})^T(\mathbf{\tilde{x}}-\mathbf{\tilde{y}})  \\
	&=d(\mathbf{\tilde{x}}, \mathbf{\tilde{y}}).
	\end{align}
	
	In order to preserve the $D_{\text{ID}}$-admissibility of the optimal system $(T^*, g^*)$, the similarity threshold for an $M$th order component-based model $\frac{1}{M} \sum_{m = 1}^M D_{\text{ID}}^{(m)}$ should be at least the same as the given similarity threshold $D_{\text{ID}}$ 
	\begin{align}
	\label{eq: Did-admissible}
	d(\mathbf{x},\mathbf{y}) \leq D_{\text{ID}} \leq \frac{1}{M} \sum_{m = 1}^M D_{\text{ID}}^{(m)}.
	\end{align}
	\subsection{$D_{\text{ID}}$-achievable Condition}
	Let $(T_m, g_m)$ denote the identification system for the $m$th component. Lemma \ref{lemma: dachievability} shows the conditions of achieving a vanishing $\Pr\{\text{maybe}\}$ of the component-based model for multivariate Gaussian signals.
\begin{lemma}
	\label{lemma: dachievability}
	Consider data sequence $\mathbf{X}$ and query sequence $\mathbf{Y}$ both being concatenations of $N$ independent blocks of  zero-mean multivariate Gaussian random variables with blocklength $M$ for $D_{\text{ID}}$-similarity identification, where $n = MN$ is the length of the whole sequence. We have a vanishing $\Pr\{\text{maybe}\}$ for the overall system
\begin{equation}
\underset{n \rightarrow \infty}{\lim}\Pr\{g^{(n)}(T^{(n)}(\mathbf{X}), \mathbf{Y})= \text{maybe}\} = 0
\end{equation}
if and only if, 
\begin{equation}
\exists m, \hspace{0.2em}
\underset{N \rightarrow \infty}{\lim}\Pr\{g_m^{(N)}\left(T^{(N)}_m\left(\mathbf{X}^{(m)}\right), \mathbf{Y}^{(m)}\right) = \text{maybe} \}= 0.
\end{equation}
\end{lemma}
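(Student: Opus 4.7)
The plan is to combine the statistical fact that, after the block-wise KLT, the $M$ component pairs $(\mathbf{X}^{(m)}, \mathbf{Y}^{(m)})$ are mutually independent, with the architectural fact that the component-based system returns \emph{maybe} precisely when every one of its $M$ component systems returns \emph{maybe}. These two facts together reduce the overall maybe-probability to a product of $M$ independent factors, and the lemma becomes a one-line statement about when such a product tends to zero.

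First I would establish mutual independence of the component pairs. Within one block, $\mathbf{X} = \mathbf{Q}\tilde{\mathbf{X}}$ with $\mathbf{Q} = \mathbf{A}_M^T$ the KLT matrix, so $\mathbf{X}$ is zero-mean Gaussian with diagonal covariance $\mathbf{\Sigma}_M$. Joint Gaussianity then upgrades componentwise decorrelation to full mutual independence, and the assumed independence across the $N$ blocks lifts this to independence of the length-$N$ sequences $\mathbf{X}^{(1)},\ldots,\mathbf{X}^{(M)}$, each being i.i.d.\ $\mathcal{N}(0, \xi_M^{(m)})$. The same reasoning applies to $\mathbf{Y}$, and because the database and query are drawn independently, the $M$ pairs $(\mathbf{X}^{(m)}, \mathbf{Y}^{(m)})$ are mutually independent random objects.

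Next I would factorise the overall maybe-probability. Because the component-based system outputs \emph{maybe} iff every component outputs \emph{maybe}, and the $m$th component decision is a measurable function of $(\mathbf{X}^{(m)}, \mathbf{Y}^{(m)})$ alone, the overall maybe-event is an intersection of $M$ mutually independent events. Hence
\begin{equation*}
\Pr\{g^{(n)}(T^{(n)}(\mathbf{X}), \mathbf{Y}) = \text{maybe}\} \;=\; \prod_{m=1}^{M} p_m^{(N)},
\end{equation*}
where $p_m^{(N)}$ denotes $\Pr\{g_m^{(N)}(T_m^{(N)}(\mathbf{X}^{(m)}), \mathbf{Y}^{(m)}) = \text{maybe}\}$.

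Finally I would read off the iff. Each $p_m^{(N)}$ lies in $[0,1]$ and $M$ is a fixed finite constant, so the product tends to $0$ as $N \to \infty$ iff at least one factor does: the ``$\Leftarrow$'' direction uses $\prod_m p_m^{(N)} \leq p_{m^\star}^{(N)}$ for any single index $m^\star$, and the ``$\Rightarrow$'' direction is the contrapositive, since if every $p_m^{(N)}$ were bounded below by a positive constant for all large $N$, their product would also be bounded below by a positive constant. The step needing the most care is the promotion of pairwise decorrelation to full mutual independence, which relies essentially on joint Gaussianity and would fail for a general $M$-variate source with merely diagonal covariance; everything after that is bookkeeping built on top of this observation.
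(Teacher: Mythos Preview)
Your argument is correct and rests on the same two pillars as the paper's proof: (i) the KLT together with joint Gaussianity promotes decorrelation to genuine mutual independence of the $M$ component pairs, and (ii) the overall maybe-event factorises as a product over the $M$ components, so that the product vanishes iff some factor does. Where you differ is in how you obtain the factorisation. You invoke the logic-AND decision rule of the component architecture directly, which immediately yields the exact identity $\Pr\{\text{maybe}\} = \prod_{m=1}^{M} p_m^{(N)}$. The paper instead routes through a typical-sphere bound and the distance-based description of the maybe-event: it uses the $D_{\text{ID}}$-admissibility condition $D_{\text{ID}} \le \frac{1}{M}\sum_m D_{\text{ID}}^{(m)}$ to pass from the global distance event to a conjunction of per-component distance events, and ends up with a proportionality $\Pr\{\text{maybe}\} \propto \prod_m p_m^{(N)}$ rather than an equality. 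Your route is shorter and sidesteps both the typical-sphere machinery and the chain of distance inequalities; the paper's route, by contrast, makes explicit where the admissibility constraint enters, which is what feeds into the rate--similarity optimisation of Theorem~\ref{theorem:2}. One small caveat on your ``$\Rightarrow$'' contrapositive: the negation of ``$p_m^{(N)} \to 0$'' is not ``$p_m^{(N)}$ is eventually bounded below by a positive constant'', so in full rigour you would need a $\limsup$/subsequence argument here; the paper's proof glosses over exactly the same point.
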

\begin{proof}
As shown in \cite{Amir:15:it}, the $\Pr\{\text{maybe}\}$ can be bounded from the above by
\begin{align}
\label{eq: pmaybeanalys}
\Pr\{\text{maybe}\} \leq &\Pr\{\text{maybe}| \mathbf{X}\in S_X^{\text{typ}}, \mathbf{Y} \in S_Y^{\text{typ}}\} \nonumber\\
&+ \Pr\{\mathbf{X} \not\in S_X^{\text{typ}}\} + \Pr\{\mathbf{Y} \not\in S_Y^{\text{typ}}\},
\end{align}
where  $S_X^{\text{typ}}$ and $S_Y^{\text{typ}}$ are the \emph{typical spheres}. Since $\Pr\{\mathbf{X} \not\in S_X^{\text{typ}}\}$ and $\Pr\{\mathbf{Y} \not\in S_Y^{\text{typ}}\}$ vanishes with $n$, we focus on the first term of (\ref{eq: pmaybeanalys}). 

Recall that the input multivariate Gaussian signals are first decorrelated by the KLT. Furthermore, the uncorrelatedness of jointly distributed Gaussian random variables imply independence, thus, the transform that decorrelates the multivariate Gaussian sources can also create independent components $X^{(m)}$. Due to the independence of the components, we can write 
\begin{align}
&\Pr \left\{g (T(\mathbf{X}), \mathbf{Y}) = \text{maybe}\right\} \\
&\propto \Pr\left\{d(T^{-1}(T(\mathbf{X})), \mathbf{Y})\leq D_{\text{ID}} \right \} \\
\label{eq:Dlarger}
&\leq \Pr \left\{\frac{1}{M}\sum_{m = 1}^{M}d\left(T_m^{-1}\left(T_m\left(\mathbf{X}^{(m)}\right)\right), \mathbf{Y}^{(m)}\right) \right.\\
&\hspace{1em} \left. \leq \frac{1}{M}\sum_{m = 1}^{M}D_{\text{ID}}^{(m)}\right\} \nonumber\\
&=\Pr \left\{ d\left(T_1^{-1}\left(T_1\left(\mathbf{X}^{(1)}\right)\right), \mathbf{Y}^{(1)}\right) \leq D_{\text{ID}}^{(1)},\cdots, \right. \\
&\hspace{1em}\left. d\left(T_M^{-1}\left(T_M\left(\mathbf{X}^{(M)}\right)\right), \mathbf{Y}^{(M)}\right)\leq D_{\text{ID}}^{(M)}\right\} \nonumber \\
\label{productproperty}
&=\prod_{m = 1}^{M} \Pr\left\{d\left(T_m^{-1}\left(T_m\left(\mathbf{X}^{(m)}\right)\right), \mathbf{Y}^{(m)}\right) \leq D_{\text{ID}}^{(m)}\right\},
\end{align}
	where $T^{-1}(k) \triangleq \{\mathbf{x}: T(\mathbf{x}) = k\}$ represents the set of vectors that have the same signature. (\ref{eq:Dlarger}) follows from that quadratic distance $d(\cdot)$ is an additive distortion measure and the $D$-admissible condition (\ref{eq: Did-admissible}). (\ref{productproperty}) follows because the joint probability of independent events equals the product of their probabilities. 
	
Therefore, the $\Pr\{\text{maybe}\}$ of the overall system is proportional to the product of the components' $\Pr\{\text{maybe}\}$
	\begin{align}
		\label{eq: productproperty}
		&\Pr \left\{g (T(\mathbf{X}), \mathbf{Y}) =\text{maybe}\right\}  \\ \nonumber
		&\propto \prod_{m = 1}^{M} \Pr \left\{g_m\left(T_m\left(X^{(m)}\right), Y^{(m)}\right) =\text{maybe}\right\}.
	\end{align}
	
	Let the blocklength $M$ goes to infinity, as a result, the overall length of the sequence $n$ also tends to infinity. In order to have a vanishing $\Pr \left\{g (T(\mathbf{X}), \mathbf{Y}) =\text{maybe}\right\}$ for the overall system, there must exist some components $m$ such that its $\Pr \left\{g_m\left(T_m\left(X^{(m)}\right), Y^{(m)}\right) =\text{maybe}\right\}$ goes to zero.
\end{proof}
Due to the product property of the $\Pr\{\text{maybe}\}$ (\ref{eq: productproperty}), the database vectors are labeled as \emph{maybe} if and only if all of its component systems are determined as \emph{maybe}. Therefore, the final output of the component-based model can be achieved by the logic AND decision. Fig. \ref{fig:qiwencheme} shows the proposed component-based model. 
\begin{figure*}[t!]
	\centering
	\includegraphics[width=0.8\textwidth]{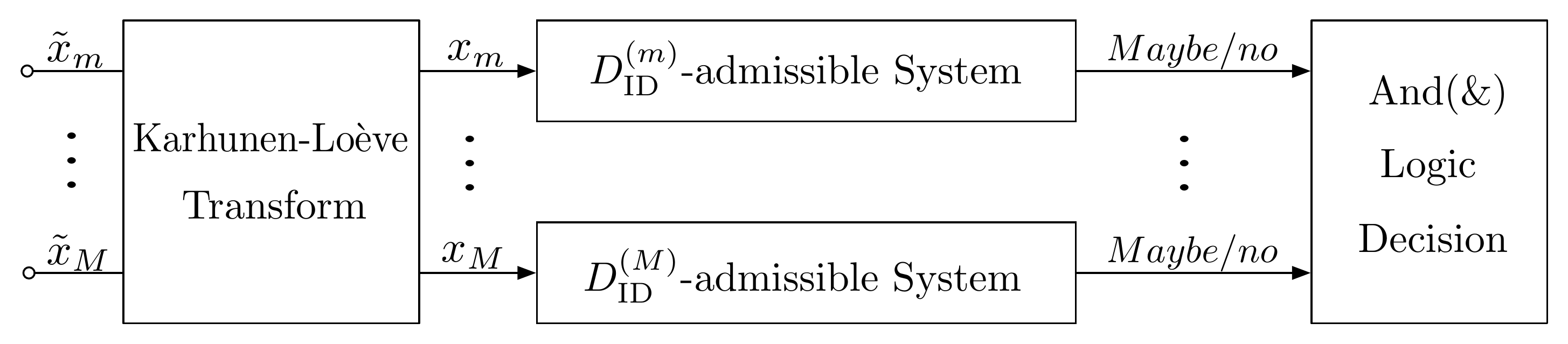}
	\caption{Component-based model for similarity identification.}
	\label{fig:qiwencheme}
\end{figure*}
	\subsection{Identification Rate $R_{\text{ID}}^{M*}$} 
	We define the \emph{identification rate} of a multivariate Gaussian source $R_{\text{ID}}^{M*}$ as the infimum of all $D_{\text{ID}}$-achievable rates. In previous sections, we show that we can use KLT to create independent component systems. We also derive the $D_{\text{ID}}$-admissible and $D_{\text{ID}}$-achievable conditions for the component-based model. In the proof of Theorem \ref{theorem:2}, we formulate the problem of computing the $R_{\text{ID}}^{M*}$ as a rate-similarity optimization problem under the $D_{\text{ID}}$-admissible and $D_{\text{ID}}$-achievable conditions of the component-based model. We show that under the optimal rate-similarity allocation, the $D_{\text{ID}}$-admissible and $D_{\text{ID}}$-achievable conditions for the enforced component-based model become equivalent to the corresponding conditions of the optimal system $(T^*, g^*)$. As a result, we can conclude that the identification rate $R_{\text{ID}}^{M*}$ can be achieved by the component-based model with optimal rate-similarity allocation. Note that we consider the case that the query and the database follow the same multivariate Gaussian distribution so that the query can be decorrelated by using the same KLT as used for the database. 
	\begin{theorem}
		\label{theorem:2}
		The identification rate of $M$th order multivariate Gaussian sources (\ref{eq:multigaussian}) is 
		\begin{equation}
		\centering
		\label{eq:mgs}
		R_{\text{ID}}^{M*} = \frac{1}{M}\sum_{m = 1}^{M}\max\left(0, \log \frac{\xi^{(m)}}{\tau}\right),
		\end{equation}
		\begin{equation}
		\centering
		\label{eq:DIDforcom}
		D_{\text{ID}} = \frac{1}{M}\sum_{m = 1}^{M}\max\left(0, 2(\xi^{(m)} - \tau)\right)
		\end{equation}
		with $\tau\in \left(0, \xi_{\max}\right]$. $\xi^{(m)}$ is the $m$th eigenvalue of the autocovariance matrix $\mathbf{C}_M$ and  $\xi_{\max}$ is its largest eigenvalue.
		
		The identification rate $R_{\text{ID}}^{M*}$ approaches infinity when the similarity threshold is
		\begin{equation}
			D_{\text{ID}} \geq \frac{2}{M}\overset{M}{\underset{m = 1}{\sum}}\xi^{(m)}.
		\end{equation}
	\end{theorem}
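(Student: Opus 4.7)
The plan is to recast the theorem as a convex rate–similarity allocation across the $M$ independent scalar Gaussian components produced by the KLT, solve it by reverse water-filling, and then argue that the resulting rate coincides with $R_{\text{ID}}^{M*}$.

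First, combining Section III.B with Lemma \ref{lemma: dachievability}, the KLT turns $\mathbf{\tilde X}$ into $M$ independent $\mathcal{N}(0,\xi^{(m)})$ streams, preserves the quadratic distance, and by (\ref{eq: productproperty}) factorizes the overall $\Pr\{\text{maybe}\}$ across the component systems. So any $D_{\text{ID}}$-admissible component-based scheme with per-component thresholds $\{D_{\text{ID}}^{(m)}\}$ obeying the admissibility condition (\ref{eq: Did-admissible}) achieves a vanishing $\Pr\{\text{maybe}\}$ as soon as at least one component does, and its overall rate is $\frac{1}{M}\sum_m R_{\text{ID}}^{(m)}(D_{\text{ID}}^{(m)})$. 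Plugging in the scalar i.i.d.\ Gaussian identification-rate formula of \cite{Amir:15:it}, which for variance $\xi$ and threshold $D<2\xi$ takes the form $\log(\xi/\tau)$ with $\tau = \xi - D/2$ and diverges as $D\uparrow 2\xi$, yields the quantity to be minimized.

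Next I would set up and solve the convex program
\begin{equation}
\min \; \frac{1}{M}\sum_{m=1}^{M}\max\left(0,\log\frac{\xi^{(m)}}{\tau_m}\right) \quad \text{s.t.} \quad \frac{1}{M}\sum_{m=1}^{M} 2\bigl(\xi^{(m)}-\tau_m\bigr) \geq D_{\text{ID}}, \;\; \tau_m \in (0,\xi^{(m)}].
\end{equation}
A standard Lagrangian/KKT argument on the active set forces the slopes $\partial R^{(m)}/\partial D_{\text{ID}}^{(m)} = -1/(2\tau_m\ln 2)$ to be equal across active components, hence a common $\tau_m=\tau$; components with $\xi^{(m)}\leq\tau$ are switched off and contribute zero rate and zero allocated distortion. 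This is the classical reverse-water-filling structure and reproduces (\ref{eq:DIDforcom}) and (\ref{eq:mgs}). The infinite-rate regime $D_{\text{ID}}\geq\frac{2}{M}\sum_m\xi^{(m)}$ corresponds to driving $\tau\to 0$ on every component, which matches the fact that $E[d(\mathbf{X},\mathbf{Y})] = \frac{2}{M}\sum_m\xi^{(m)}$, so a $D_{\text{ID}}$-admissible scheme cannot make $\Pr\{\text{maybe}\}$ vanish beyond this point.

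The main obstacle is the converse: showing that the optimum of the component-based allocation actually equals $R_{\text{ID}}^{M*}$, rather than merely upper-bounding it. Following the strategy announced before the theorem, I would argue that at the optimal allocation the admissibility condition (\ref{eq: Did-admissible}) and the product-form achievability condition of Lemma \ref{lemma: dachievability} become tight, hence equivalent to the corresponding conditions for an unrestricted optimal system $(T^*,g^*)$. Concretely, any $D_{\text{ID}}$-admissible scheme on $\mathbf{\tilde X}$ is, after the KLT, a $D_{\text{ID}}$-admissible scheme on the independent vector $\mathbf{X}$, and combining the scalar converse of \cite{Amir:15:it} per component with the independence across components yields an additive rate lower bound matching the reverse-water-filling expression. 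The delicate point is excluding joint schemes that might exploit cross-component structure, which is ruled out by the independence of the decorrelated components together with the additive form of the similarity constraint; once this is in place, achievability and converse meet and the formulas in (\ref{eq:mgs}) and (\ref{eq:DIDforcom}) are established.
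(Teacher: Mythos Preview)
Your proposal is correct and follows essentially the same route as the paper: decorrelate with the KLT, invoke Lemma~\ref{lemma: dachievability} and the admissibility condition (\ref{eq: Did-admissible}) to reduce to a per-component rate--similarity allocation, plug in the scalar Gaussian identification-rate formula, and solve the resulting convex program by a Lagrangian/KKT argument to obtain the common-$\tau$ reverse-water-filling solution. The paper's converse step is handled exactly as you anticipate---by arguing that at the optimal allocation the admissibility constraint is tight and the product form in (\ref{eq: productproperty}) holds with equality, so the component-based model reproduces the unrestricted optimum; your honest flagging of this as the delicate point is, if anything, more candid than the paper's own treatment.
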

	\begin{proof}
		  Since the KLT is an invertible transform, the mutual information in the transform domain is preserved \cite{kraskov2004estimating}. Thus, the achievable rate required by the component-based model is identical to the rate for the original signal. In addition, since the components $X^{(m)}$ are independent of each other, then the $M$th order mutual information between the input data and its signature is the sum of the mutual information of signal and signature of all components
		 \begin{equation}
		 I_M(\mathbf{X};T(\mathbf{X})) = \sum_{m = 1}^M I_1\left(X^{(m)};T_m(X^{(m)}\right).
		 \end{equation}
		 where $I_M$ denotes the $M$th order mutual information. 
		 
		 From the $D_{\text{ID}}$-achievable condition of the component-based model, we know that there must exist some components that have vanishing $\Pr\{\text{maybe}\}$. Furthermore, we are interested in computing the minimum achievable rate. Hence, we assume that each component uses an ideal $D_{\text{ID}}^{(m)}$-admissible scheme such that each component system operates on its identification rate curve. We can then define the achievable rate of the component-based model as the average of the component rates $R_{\text{ID}}^{(m)}$, {\itshape i.e.,}
		 \begin{equation}
		 \label{eq: ratedefinition}
		 R_{\text{ID}} = \frac{1}{M}\sum_{m = 1}^M R_{\text{ID}}^{(m)}\left(D_{\text{ID}}^{(m)}\right) .
		 \end{equation}
		 
		 For a given similarity threshold $D_{\text{ID}}$ of the original signal, the infimum of all achievable rate of the component-based model can be obtained by solving the following constrained optimization problem
		\begin{align}
		\label{equ:lagrange1}
		\begin{split}
		&\min_{D_{\text{ID}}^{(1)}, \dots, D_{\text{ID}}^{(M)}}   R_{\text{ID}} = \frac{1}{M}\sum_{m = 1}^{M} R_{\text{ID}}^{(m)}\left(D_{\text{ID}}^{(m)}\right)\\
		&\hspace{2em}\text{s.t.} \hspace{1em}  \frac{1}{M}\sum_{m = 1}^{M}D_{\text{ID}}^{(m)} \geq D_{\text{ID}} , \\
		&\hspace{2em}\text{s.t.} \hspace{1em} D_{\text{ID}}^{(m)} \geq 0.
		\end{split}
		\end{align}
		 The first inequality constraint follows from the $D_{\text{ID}}$-admissible condition (\ref{eq: Did-admissible}) for the component-based model. The second inequality constraint follows from the nonnegativity definition of the component similarities. 
		
		Note, all identification rate functions $R^{(m)}_{\text{ID}}\left(D^{(m)}_\text{ID}\right)$ of the components are convex and strictly increasing. Hence, we consider the equivalent problem
		\begin{align}
		\label{equ:lagrange2}
		\begin{split}
		&\min \hspace{1em}  J = R_{\text{ID}}\left(D_{\text{ID}}^{(m)}\right) - vD_{\text{ID}} \\
		&\hspace{1em} \text{s.t.} \hspace{0.8em}  D_{\text{ID}}^{(m)} \geq 0,
		\end{split}
		\end{align}
		where $v$ is a positive Lagrangian multiplier. 
		
		It is shown in \cite{amir13} that the identification rate for i.i.d. Gaussian sources is 
		\begin{align}
		\label{equ:bts}
		R_{\text{ID}}^*(D_{\text{ID}}) =
		\begin{cases}
		\log(\frac{2\sigma^2}{2\sigma^2-D_{\text{ID}}}) \hspace{1em}  &\text{for} \ 0 \leq D_{\text{ID}} < 2\sigma^2\\
		\infty  &\text{for} \ D_{\text{ID}} \geq 2\sigma^2.
		\end{cases}
		\end{align}
		Since the variance $\sigma_m^2$ of the component $X^{(m)}$ is equal to the eigenvalue $\xi_M^{(m)}$ of the $M$th order autocovariance matrix $\mathbf{C}_M$, when $0 \leq D_{\text{ID}} < 2\sigma^2$, the derivative of the cost function $J$ with respect to $D_{\text{ID}}^{(m)}$ can be expressed as:
		\begin{equation}
		\label{partial1}
		\frac{\partial J}{\partial D_{\text{ID}}^{(m)}} = \frac{1}{\ln(2)\left(2\xi_M^{(m)}-D_{\text{ID}}^{(m)}\right)} -v.
		\end{equation}
		By setting (\ref{partial1}) to zero, we obtain that $D_{\text{ID}}^{(m)}$ is determined by the eigenvalue $\xi_M^{(m)}$ and the value of $v$, {\itshape i.e.,}
		\begin{equation}
		\label{pareto1}
		D_{\text{ID}}^{(m)}= 2\xi_M^{(m)}- \frac{1}{v\ln(2)}.
		\end{equation}
		
		Let $v$ be expressed as $\frac{1}{2\ln(2)\tau}$, $\tau \geq 0$. In order to satisfy the non-negative constraint of $D_{\text{ID}}^{(m)}$, each component will only be assigned with a positive similarity threshold when the value of  $\tau$ is smaller than its corresponding eigenvalue $\xi_M^{(m)}$,
		\begin{equation}
		\label{eq: Dcomponent}
			D_{\text{ID}}^{(m)} = \max\left(0, 2(\xi^{(m)} - \tau)\right).
		\end{equation}
		In order to have at least one component assigned with positive similarity threshold, we set the largest $\tau$ as $\underset{m}{\max} \ \xi_M^{(m)} = \xi_{\max}$. 
		
		By substitute (\ref{eq: Dcomponent}) into (\ref{equ:bts}), we can obtain the identification rate for the $m$th component. Then, the infimum of the achievable rate for the component-based model is 
		\begin{equation}
		\label{eq:r1}
		R_{\text{ID}}^{M*}(\tau) = \frac{1}{M}\sum_{m = 1}^{M} \max\left[0, \log\left(\frac{\xi_M^{(m)}}{\tau}\right)\right),
		\end{equation}
		and the corresponding similarity threshold of the component-based model is 
		\begin{equation}
		\label{eq:d1}
		D_{\text{ID}}(\tau) =  \frac{1}{M}\sum_{m = 1}^{M} \max\left(0, 2(\xi_M^{(m)}-\tau)\right).
		\end{equation}
		The optimal rate-similarity curve of the component-based model can be obtained by sweeping over permitted values of $\tau$.
		
		According to the Karush–Kuhn–Tucker conditions, the optimal point occurs on the constraint surface.  Therefore, the inequality constraint in (\ref{equ:lagrange1}) can reach equality when optimal point is achieved
		\begin{equation}
		D_{\text{ID}} = \frac{1}{M}\sum_{m = 1}^{M}D_{\text{ID}}^{(m)}.
		\end{equation}
		On the other hand, the optimal condition also achieves equality in (\ref{eq: productproperty}), hence, the component-based model preserves the original $\Pr\{\text{maybe}\}$. 
		Therefore, the imposed component-based model preserves the same characteristics of the original system under the optimal rate-similarity allocation. We can conclude that the derived optimal rate-similarity functions (\ref{eq:r1}), (\ref{eq:d1}) based on the component-based model are identical to the identification rate function of the multivariate Gaussian sources.  
		
		For the limit case of $\tau = 0$, the model's identification rate (\ref{eq:r1})  approaches infinity and the model's similarity threshold is 
		\begin{equation}
			D_{\text{ID}} = \frac{2}{M}\overset{M}{\underset{m = 1}{\sum}}\xi^{(m)},
		\end{equation}
		where each component similarity threshold $D_{\text{ID}}^{(m)}$ is $2\xi_M^{(m)}$. If the given model's similarity threshold $D_{\text{ID}}$ is larger than $\frac{2}{M}\overset{M}{\underset{m = 1}{\sum}}\xi^{(m)}$, there must exist components that have similarity thresholds larger than $2\xi_M^{(m)}$. Therefore, according to (\ref{equ:bts}), the overall identification rate approaches infinity.
	\end{proof}
}
	The Theorem shows that the optimal identification rate can be achieved by activating the components according to their variances after the KLT. At the lowest rate, only the component with the largest variance is activated (assigned with positive similarity threshold). In this case, the $M$component  model uses only one component. Then, as the rate increases, the remaining components are activated in the order of their component variances. The activated components operate according to the Pareto condition. 
	
	\hanwei{
	Similar to i.i.d. Gaussian sources, multivariate Gaussian sources also have a similarity threshold limit that the systems can achieve vanishing $\Pr\{\text{maybe}\}$. The similarity threshold limit for multivariate Gaussian sources is twice of the trace of its covariance matrix. If the systems are given a similarity threshold that is larger than the similarity threshold limit of the processed source, the query and database are inherently similar. Hence, the $\Pr\{\text{maybe}\}$ can never vanish regardless of what system is used.
	\section{Identification Rate of \\ Gaussian Sources with Memory} 
	\label{sec:GSWM}
	In this section, we extend the identification rate result of multivariate Gaussian sources to general Gaussian sources with memory. The proof of Theorem {\ref{theorem:2}} shows that high dimensional multivariate Gaussian similarity queries can be perfectly represented by the optimal component-based model. Based on this, we study the general Gaussian sources with memory using the optimal component-based model and apply the Szegö's theorem for sequences of Toeplitz matrices \cite{greander58} under the limiting cases. Theorem \ref{theorem:1} gives the result of the identification rate of the zero-mean stationary Gaussian process.
	\begin{theorem}
			\label{theorem:1}
		The identification rate function $R_{\text{ID}}^*(D_{\text{ID}})$ for a zero-mean stationary Gaussian process with memory is 
		\begin{equation}
			\label{eq:rwithm}
			R_{\text{ID}}^{*} = \frac{1}{2\pi}\int_{-\pi}^{{\pi}}\max\left(0, \log\left(\frac{\Phi_{XX}(\omega)}{\tau}\right)\right)d\omega,
		\end{equation}
		\begin{equation}
			\label{eq:dwithm}
			D_{\text{ID}}^{*}  = \frac{1}{2\pi}\int_{-\pi}^{{\pi}}\max\left(0, 2\left(\Phi_{XX}(\omega) -\tau\right)\right)d\omega
		\end{equation}
		with $\tau \in \left(0, M_\phi\right]$, where $\Phi_{XX}(\omega)$ is the power spectral density of the source and $M_\phi$ is the \emph{essential supremum} of $\Phi_{XX}(\omega)$.
		
		The identification rate $R_{\text{ID}}^{*}$ approaches infinity when the similarity threshold is 
		\begin{equation}
		\label{eq: signalpower}
		D_{\text{ID}} \geq \frac{1}{\pi}\int_{-\pi}^{\pi}\Phi_{XX}(\omega)d\omega.
		\end{equation}
	\end{theorem}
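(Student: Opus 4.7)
The plan is to obtain Theorem~\ref{theorem:1} as the $M \to \infty$ limit of Theorem~\ref{theorem:2}, exploiting that a zero-mean stationary Gaussian process is the limit of $M$th order multivariate Gaussians whose autocovariance matrix $\mathbf{C}_M$ is Hermitian Toeplitz with generating function exactly equal to the power spectral density $\Phi_{XX}(\omega)$. The rate and similarity expressions (\ref{eq:r1})--(\ref{eq:d1}) are already written as empirical averages of continuous functionals of the eigenvalues $\{\xi_M^{(m)}\}$ of $\mathbf{C}_M$, which makes them ideally suited for an application of Szegő's theorem for Toeplitz matrices.

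First I would invoke Szegő's theorem in its standard form: for any function $F$ continuous on the range of $\Phi_{XX}$,
\[
\lim_{M\to\infty}\frac{1}{M}\sum_{m=1}^{M}F\bigl(\xi_M^{(m)}\bigr) \;=\; \frac{1}{2\pi}\int_{-\pi}^{\pi}F\bigl(\Phi_{XX}(\omega)\bigr)\,d\omega.
\]
Applying this with $F_1(\xi)=\max(0,\log(\xi/\tau))$ to (\ref{eq:r1}) gives (\ref{eq:rwithm}), and with $F_2(\xi)=\max(0,2(\xi-\tau))$ to (\ref{eq:d1}) gives (\ref{eq:dwithm}). Both $F_1$ and $F_2$ are continuous in $\xi$ for $\xi \geq 0$, so the hypotheses of Szegő are satisfied. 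To fix the admissible range of $\tau$, I would use the fact that the operator norm $\xi_M^{\max}$ of $\mathbf{C}_M$ converges to the essential supremum $M_\phi$ of its generating function; the finite-$M$ constraint $\tau \in (0,\xi_M^{\max}]$ therefore passes to $\tau \in (0, M_\phi]$ in the limit.

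For the divergence bound (\ref{eq: signalpower}), I would take $\tau \to 0^+$: the rate integrand $\log(\Phi_{XX}(\omega)/\tau)$ blows up on any set where $\Phi_{XX}(\omega) > 0$, and monotone convergence yields $R_{\text{ID}}^* \to \infty$, while the similarity integrand increases pointwise to $2\Phi_{XX}(\omega)$, giving the threshold $\frac{1}{\pi}\int_{-\pi}^{\pi}\Phi_{XX}(\omega)\,d\omega$, which is twice the total signal power and is the spectral analogue of the trace bound $\frac{2}{M}\sum_m \xi^{(m)}$ that appeared at the end of Theorem~\ref{theorem:2}.

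The main obstacle is the technical justification that the asymptotic distribution of eigenvalues of $\mathbf{C}_M$ truly matches $\Phi_{XX}$: Szegő's theorem requires that $\Phi_{XX}$ be bounded and (at minimum) Riemann integrable, which should be stated as a regularity assumption on the spectral density. A secondary subtlety is that our component-based model sets up the minimization problem at each finite $M$, so one must argue that the componentwise optimal allocation obtained from Theorem~\ref{theorem:2} at order $M$ converges to the componentwise optimum of the limiting integral problem; this follows because both problems share the same water-filling structure in $\tau$, so the optimal $\tau$ carries over unchanged, but this alignment of the two minimizations is the step that deserves the most care.
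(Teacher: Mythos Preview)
Your proposal is correct and follows essentially the same route as the paper: pass from the finite-$M$ formulas of Theorem~\ref{theorem:2} to the spectral integrals by applying Szeg\H{o}'s theorem for Toeplitz matrices to the continuous functionals $\xi\mapsto\max(0,\log(\xi/\tau))$ and $\xi\mapsto\max(0,2(\xi-\tau))$, invoke the eigenvalue bounds of $\mathbf{C}_M$ to justify $\tau\in(0,M_\phi]$, and let $\tau\to 0^+$ for the divergence threshold. Your discussion of the regularity hypotheses on $\Phi_{XX}$ and of the interchange of limit and minimization is in fact more careful than the paper's own argument, which takes both points for granted.
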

	\begin{proof}
		Given the stationary Gaussian source $\{\tilde{X}_n\}$,  we can decompose the source into vectors $\mathbf{\tilde{X}}$ of $M$ successive random variables and describe those vectors with a $M$th order multivariate Gaussian distribution (\ref{eq:multigaussian}). Then we can apply the KLT transform on the decomposed vectors $\mathbf{X} = \mathbf{A}_M^T\mathbf{\tilde{X}}$, where $\mathbf{A}_M$ is the eigenmatrix of the covariance matrix $\mathbf{C}_M$. The resulted decorrelated source $\{ X_n\}$ is given by the concatenation of the random vectors $\mathbf{X}$. 
		
		After the KLT, the signal is processed by the independent component $D_{\text{ID}}^{(m)}$-admissible systems. The identification rate of the $M$th order multivariate Gaussian sources is known from the Theorem \ref{theorem:2}. Therefore, we can obtain the identification rate  of the stationary Gaussian random process $\{\tilde{X}_n\}$ by taking the  limit
		\begin{equation}
		R_{\text{ID}}^{*}(D_{\text{ID}}) =  \lim_{M \rightarrow \infty} R_{\text{ID}}^{M*}(D_{\text{ID}})
		\end{equation}
         
        The autocovariance matrices $\mathbf{C}_M$ of stationary processes are Toeplitz matrices $T_M(\Phi)$, where $\Phi$ is the power spectral density of the source defined by the Fourier series of the elements on the $m$th diagonal of $\mathbf{C}_M$
		\begin{equation}
		\Phi_{XX}(\omega) = \sum_{m = -\infty}^{\infty}\phi_m e^{-j\omega m}.
		\end{equation}
		If the \emph{essential supremum} and \emph{essential infimum} of $\Phi_{XX}(\omega)$ are finite, the theorem for sequences of Toeplitz matrices \cite{greander58} states that
		\begin{equation}
		\label{eq:sequenceofmatrix}
		\lim_{M \rightarrow \infty} \frac{1}{M}\sum_{m = 1}^MG\left(\xi_M^{(m)}\right) =  \frac{1}{2\pi}\int_{-\pi}^{\pi} G\left(\Phi_{XX}(\omega)\right)d\omega,
		\end{equation}
		for any function $G$ is continous on the range of $\Phi_{XX}(\omega)$. Therefore, when $M \rightarrow \infty$, we can apply (\ref{eq:sequenceofmatrix}) to the identification rate function of multivariate Gaussian sources (\ref{eq:r1}), (\ref{eq:d1}), and obtain (\ref{eq:rwithm}), (\ref{eq:dwithm}). 
		
		In addition, according to the Lemma 4.1 of \cite{gray2006}, the eigenvalues of Toeplitz matrix $T_M(\Phi)$ are bounded by the \emph{essential infimum} and \emph{essential supremum} of $\Phi_{XX}(\omega)$. We denote the \emph{essential supremum} of $\Phi_{XX}(\omega)$ as $M_\phi$. Hence, we can rewrite the permitted values of $\tau$ as $\left(0, M_\phi\right]$.
		
		The extreme case of (\ref{eq: signalpower}) is obtained when $\tau = 0$ by following a similar arrangement of the proof of Theorem \ref{theorem:1}.
	\end{proof}
          The identification rate function follows a similar "reverse water-filling" process as the rate-distortion function of Gaussian sources with memory \cite{Cover2006}.  The value of  $\tau$ starts decreasing from $M_\Phi$, the rate is first allotted to frequencies with the largest altitudes. As the value of $\tau$ decreases, the rate is put to frequencies with lower altitudes. The difference is that the distortion is calculated as the integral of the minimum values of the frequency attitude and the water level, while the similarity threshold is calculated as the integral of the differences between the frequency attitude and the water level. An example of the reverse water-filling process is shown in Figure \ref{fig:waterfilling}.
          \begin{figure}[!ht]
          	\centering
          	\includegraphics[width=0.5\textwidth]{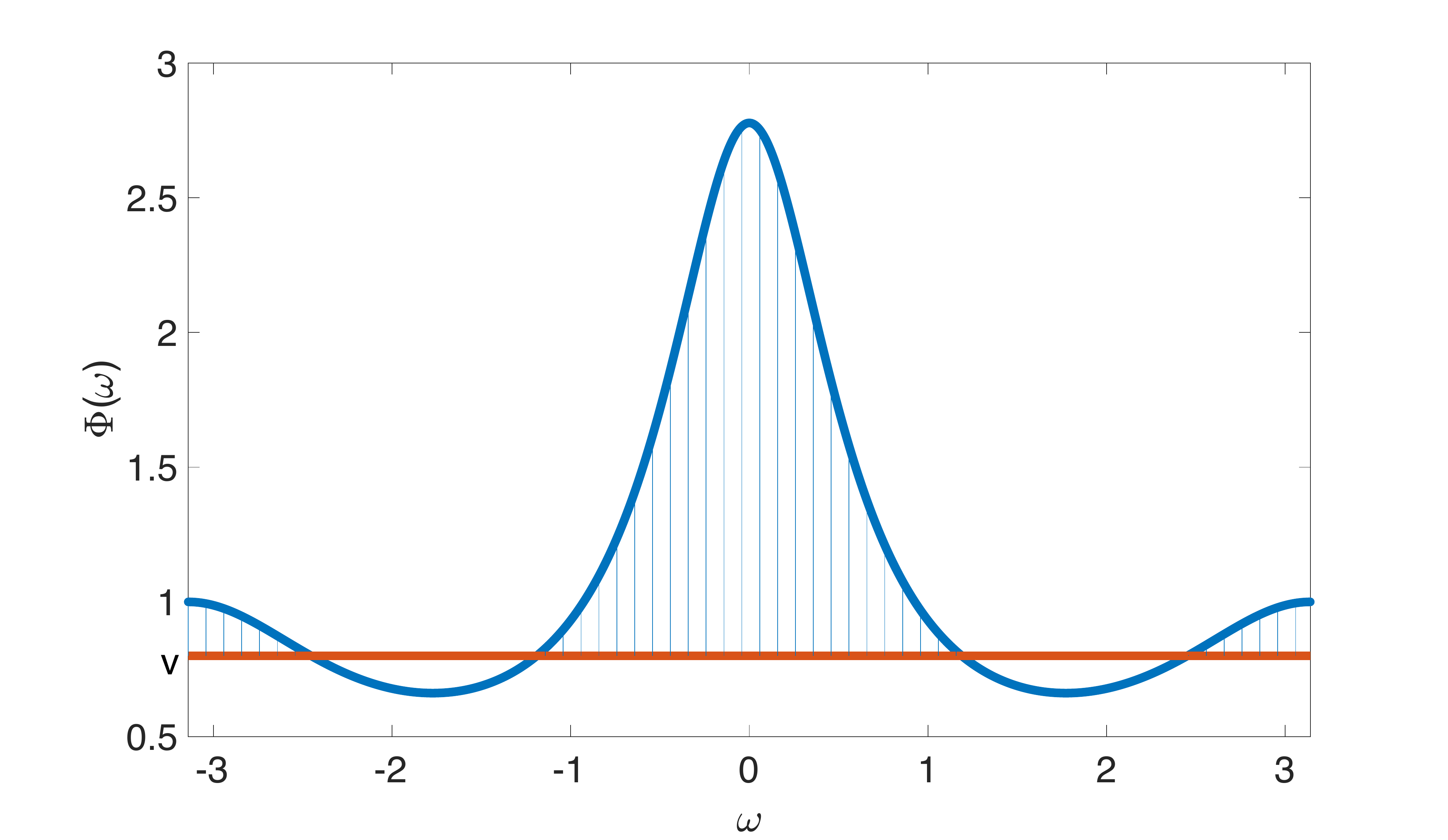}
          	\caption{Reverse water-filling for multivariate Gaussian sources.}
          	\label{fig:waterfilling}
          \end{figure}
          
          The similarity threshold limit for the $D_{\text{ID}}$-{\itshape achievable} rate of Gaussian sources with memory is twice the power of the given signal (\ref{eq: signalpower}). That is, if the given similarity threshold is larger than the twice of signal power, the two signals are inherently similar, and there is no system that can achieve a  vanishing $\Pr\{\text{maybe}\}$. 
          
          In the following example, we plot the identification rate curves for Gauss-Markov processes with different correlation coefficients.
  \begin{exmp}
  	\normalfont
  	We consider zero-mean Gauss-Markov processes with unit variance. The power spectral density of the Gauss-Markov process is 
  	\begin{equation}
  	\Phi(\omega) = \frac{1-\rho^2}{1-2\rho\cos(\omega)+\rho^2}.
  	\end{equation}
  	 The largest value of $\Phi(\omega) $ is obtained when $\cos(2\pi\omega) = 1$. We plot the identification rate function for Gauss-Markov processes with $\rho = 0$, $\rho = 0.5$ and  $\rho = 0.9$ respectively. The integral in (\ref{eq:rwithm}) and (\ref{eq:dwithm}) can be approximated by the Riemann sum. We also plot the identification rate of i.i.d. Gaussian sources for reference. It is overlapped with the $\rho = 0$ case as expected.
  	 
  	 We can also observe that the identification rates approach infinity when the similarity thresholds reach their corresponding limits. Since more correlated signals have higher signal power, their corresponding similarity threshold limits are also larger.
  \end{exmp}
  \begin{figure}[!ht]
  	\centering
  	\includegraphics[width=0.5\textwidth]{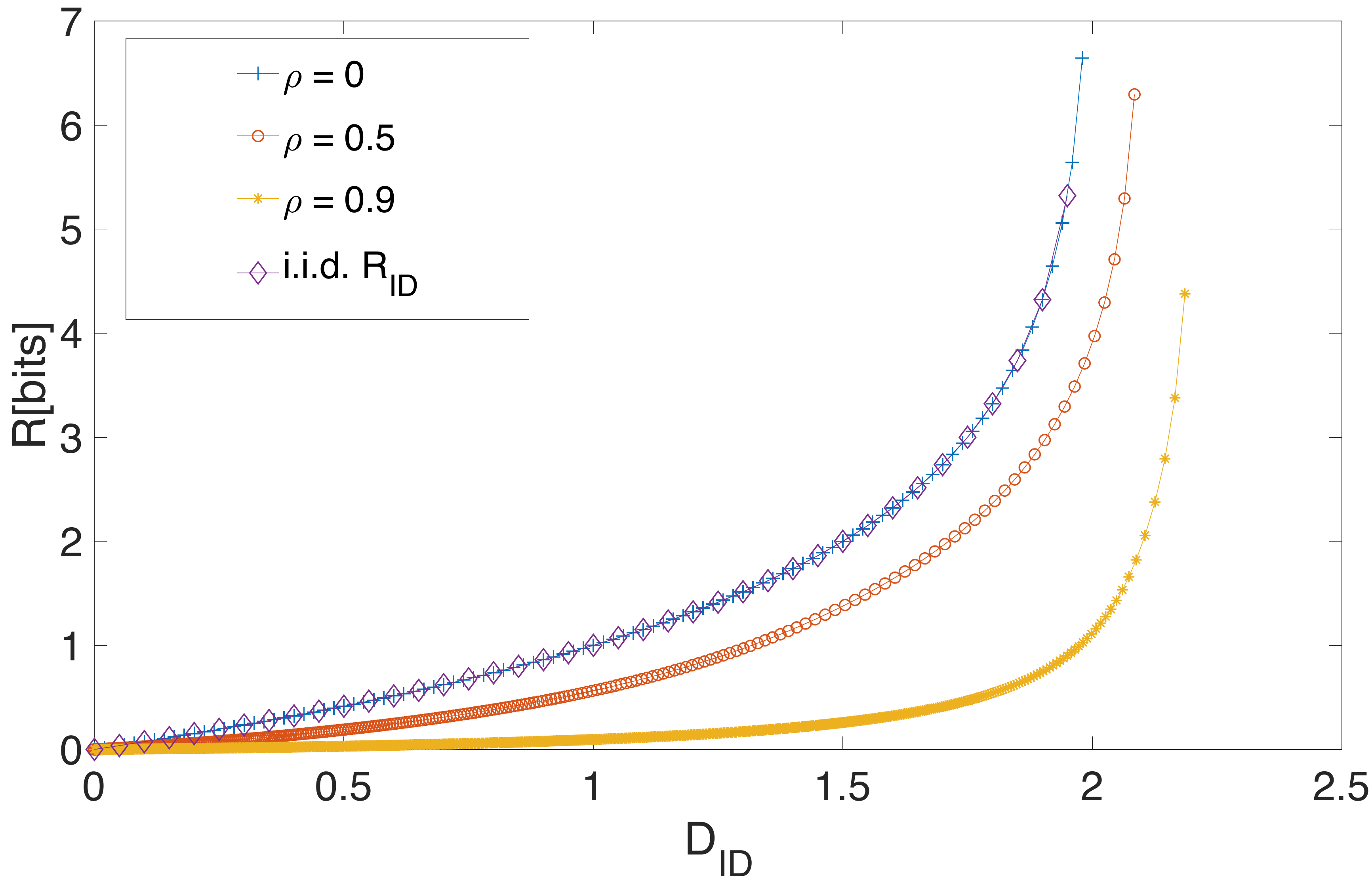}
  	\caption{Comparison of identification rates for Gauss-Markov processes with different correlation coefficients.}
  	\label{fig:mwgfigure}
  \end{figure}
	\section{Component-based model with \\ practical schemes}
	\label{sec:TC-comp}
	 Theorem \ref{theorem:2} is derived on the premise that each component uses an optimal $D_{\text{ID}}$-admissible system. However, the optimal $D_{\text{ID}}$-admissible system is difficult to achieve due to the triangle-inequality constraint that most distortion measures possess. The state-of-the-art practical schemes for the similarity identification problem are the triangle-inequality based TC-$\triangle$ and LC-$\triangle$ schemes proposed in \cite{OCHOA:dcc:14}, where the TC-$\triangle$ scheme is consistently performs better than the LC-$\triangle$ scheme. Therefore, we replace the ideal scheme with the practical TC-$\triangle$ scheme for each component. The described component-based model equipped with TC-$\triangle$ schemes is illustrated in Figure. \ref{fig:tscheme}.}
\begin{figure*}[t!]
\centering
\includegraphics[width=0.8\textwidth]{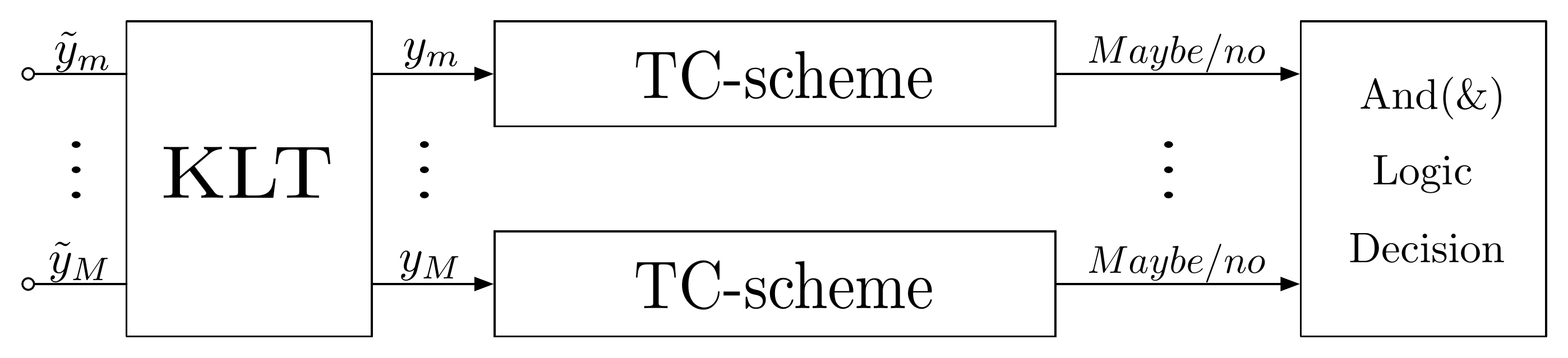}
\caption{Component-based model with TC-$\triangle$ schemes.}
\label{fig:tscheme}
\end{figure*}
 
 We denote the \hanwei{minimum achievable rates} of TC-$\triangle$ and LC-$\triangle$ as $R_{\text{ID}}^{\text{TC}-\triangle}$ and $R_{\text{ID}}^{\text{LC}-\triangle}$, respectively. The authors in \cite{amir:14:isit} show that the minimum achievable rates of TC-$\triangle$ and LC-$\triangle$ schemes generally hold the relation $R_{\text{ID}}^* \leq R_{\text{ID}}^{\text{TC}-\triangle} \leq R_{\text{ID}}^{\text{LC}-\triangle}$. \hanwei{Hence, we select the TC-$\triangle$ scheme as the $D_{\text{ID}}^{(m)}$-admissible system for each component.}
		
The next step is to evaluate the rate-similarity performance of component-based models constructed by TC-$\triangle$ schemes. While the $R_{\text{ID}}^{\text{LC}-\triangle}$ can be evaluated by employing a rate-distortion code on the triangle-inequality principle, the $R_{\text{ID}}^{\text{TC}-\triangle}$ can only be computed  numerically. Here, we propose an iterative method to numerically approximate the minimum achievable rate $R_{\text{ID}}^{\text{TC}-\triangle}$ of TC-$\triangle$ schemes. We only consider the special case where $P_X = P_Y$. The general case that the query and the database are drawn from different distributions can be naturally extended in similar ways as shown in previous works \cite{amir:dcc:13}, \cite{Amir:15:it}.
		\subsection{Iterative Method for Approximating $R_{\text{ID}}^{\text{TC}-\triangle}$}
		It is shown in \cite{amir:14:isit} that any similarity threshold below $D_{\text{ID}}^{\text{TC-}\triangle}(R_{\text{ID}})$ can be attained by a TC-$\triangle$ scheme of rate $R$, where
		\begin{equation}
			\label{eq: tcscheme}
			D_{\text{ID}}^{\text{TC-}\triangle}(R_\text{ID}) \triangleq \underset{P_{\hat{X}|X}: I(X;\hat{X})\leq R_{\text{ID}}}{\max} \mathbb{E}[\rho(\hat{X}, Y)] -\mathbb{E}[\rho(\hat{X}, X)],
		\end{equation}
		where $\hat{X}$ is the reconstructed codeword, and $\hat{X}$ and $Y$ are independent. Since $R_{\text{ID}}^{\text{TC}-\triangle}$ is a strictly increasing function with $D_{\text{ID}}>0$ \cite{amir:14:isit}, for any $s \geq 0$, there exists an exposed point on the $R_{\text{ID}}^{\text{TC}-\triangle}$ curve such that the slope of a tangent to the curve at that point is equal to $s$. 
		
		Denote the exposed points on the $R_{\text{ID}}^{\text{TC}-\triangle}$ curve by $(D_s, R_{\text{ID}}^{\text{TC}-\triangle}(D_s))$, where $D_s$ follows (\ref{eq: tcscheme}):
		\begin{equation}
			D_s = \mathbb{E}[\rho(\hat{X}, Y)] -\mathbb{E}[\rho(\hat{X}, X)].
		\end{equation}
		The achievable rate region is the area above the $(D_s,R_{\text{ID}}^{\text{TC}-\triangle}(D_s))$ curve. To obtain the exposed point on the curve, it is equivalent to minimize the intersection of the tangent of the exposed point with the ordinate.
		\begin{equation}
			\label{eq:loss1}
			\min R_{\text{ID}}^{\text{TC}-\triangle}(D_s)-sD_s.
		\end{equation}
	By varying over all $s \geq 0$, we then trace out the whole rate-similarity curve. 
	
	In the following, we denote the truncated discretized distribution of the source as $\mathbf{P}_{\mathbf{X}} = [p(x_1), p(x_2),..., p(x_n))]^T$ and the marginal distribution of reconstructed codewords as $\mathbf{t}_{\hat{\mathbf{x}}} = [t(\hat{x}_1), t(\hat{x}_2),..., t(\hat{x}_m)]^T$. We form the conditional probability mass functions as columns of an $m \times n$ matrix:
	\[\mathbf{Q} =
	\begin{bmatrix}
	Q(\hat{x}_1|x_1)  & \dots  & Q(\hat{x}_1|x_n) \\
	\vdots &  \ddots & \vdots \\
	Q(\hat{x}_m|x_1)  & \dots  & Q(\hat{x}_m|x_n)
	\end{bmatrix}
	\]
	
	The expected distortions between $X$ and $\hat{X}$ when averaging over their marginal distributions and joint distribution are
	\begin{equation}
		\mathbb{E}_{P_X \times P_{\hat{X}}}\rho(X, \hat{X}) = \sum_{i =1}^{n}\sum_{j = 1}^mp(x_i)t(\hat{x}_j)\rho(x_i,\hat{x}_j), 
	\end{equation} 
	and 
	\begin{equation}
		\mathbb{E}_{P_{X,\hat{X}}}\rho(X, \hat{X}) = \sum_{i =1}^{n}\sum_{j = 1}^mp(x_i)Q(\hat{x}_j|x_i)\rho(x_i,\hat{x}_j).
	\end{equation} 
	Since $R_{\text{ID}}\leq\inf_{\mathbf{Q}} I(X;\hat{X})$, the objective function (\ref{eq:loss1}) can be expressed as a minimization over $\mathbf{Q}$
	\begin{align}
		\label{equ:obj}
		&R_{\text{ID}}^{\text{TC}-\triangle}(D_s)-sD_s \\
		=& \underset{\mathbf{Q}}{\min}(I(\mathbf{P}_{\mathbf{X}},\mathbf{Q}) -sD_s) \nonumber \\
		\label{eq:fullordinate}
		\geq &\underset{\mathbf{Q}}{\min}[I(\mathbf{P}_{\mathbf{X}},\mathbf{Q})-s(\mathbb{E}_{P_X \times P_{\hat{X}}}\rho(X, \hat{X}) -\mathbb{E}_{P_{X,\hat{X}}}\rho(X, \hat{X})].
	\end{align}
	Note that the elements of $\mathbf{Q}$ represent probabilities and each column of $\mathbf{Q}$ is a probability mass function. This introduces the constraints $Q(\hat{x}_j|x_i) \geq 0$ and $\sum_{j = 1}^mQ(\hat{x}_j|x_i) = 1$. We temporarily ignore the constraint $Q(\hat{x}_j|x_i) \geq 0$ and define a Lagrange cost function as
	\begin{equation}
		\label{equ:cost1}
		\begin{split}
			J(Q) &= \sum_{i =1}^{n}\sum_{j = 1}^m p(x_i)Q(\hat{x}_j|x_i)\log\left(\frac{Q(\hat{x}_j|x_i)}{t(\hat{x}_j)}\right) -\\
			&s \sum_{i =1}^{n}\sum_{j = 1}^m p(x_i) \rho(x_i,\hat{x}_j)(t(\hat{x}_j)-Q(\hat{x}_j|x_i)) + \\
			& \sum_{i = 1}^n v_i \sum_{j = 1}^mQ(\hat{x}_j|x_i),
		\end{split}
	\end{equation}
	where $v_i$ are the Lagrange multipliers.
	
	Differentiating with respect to $Q(\hat{x}_j|x_i)$, we have
	\begin{align}
		\label{eq:ba1}
		\frac{\partial J}{\partial Q(\hat{x}_j|x_i)} &= p(x_i)\log\frac{Q(\hat{x}_j|x_i)}{t(\hat{x}_j)} + sp(x_i)(\rho(x_i,\hat{x}_j)\nonumber\\&-p(x_i)\sum_{x'}p(x')\rho(x', \hat{x}_j)) + v_i = 0.
	\end{align}
	Setting $\log \mu(x_i) = \frac{v_i}{p(x_i)}$, we obtain from (\ref{eq:ba1})
	\begin{align}
		&p(x_i)\left[\log\frac{Q(\hat{x}_j|x_i)}{t(\hat{x}_j)}+s(\rho(x_i,\hat{x}_j)-\right.\nonumber\\&\left.p(x_i)\sum_{x'}p(x')\rho(x', \hat{x}_j)) + \log \mu(x_i)\right] = 0,
	\end{align}
	or 
	\begin{equation}
		\label{equ:optimize2}
		Q(\hat{x}_j|x_i) = \frac{t(\hat{x}_j) e^{-s(\rho(x_i,\hat{x}_j)-p(x_i)\sum_{x'}p(x')\rho(x', \hat{x}_j))}}{\mu(x_i)}.
	\end{equation}
	Since $\sum_{j = 1}^{m}Q(\hat{x}_j|x_i) = 1$, we have
	\begin{equation}
		\mu(x_i) = \sum_{j = 1}^{m} t(\hat{x}_j) e^{-s(\rho(x_i,\hat{x}_j)-p(x_i)\sum_{x'}p(x')\rho(x', \hat{x}_j))}.
	\end{equation}
	We can see that $Q(\hat{x}_j|x_i)$  is always nonnegative. 
	
	To vectorize the above operations, we define the $m \times n$ distortion matrix as:
		\[ \mathbf{\Gamma} =
		\begin{bmatrix}
		\rho(x_1,\hat{x}_1)  & \dots  & \rho(x_n,\hat{x}_1) \\
		\vdots  &  \ddots & \vdots \\
		\rho(x_1,\hat{x}_m)  & \dots  & \rho(x_n,\hat{x}_m)
		\end{bmatrix}
		\] Hence, the conditional probability mass function (\ref{equ:optimize2}) can be expressed as 
	\begin{equation}
		\label{eq:iterupdate2}
		\mathbf{Q}' = \exp(-s(\mathbf{\Gamma}-\mathbf{\Gamma}\mathbf{P}_{\mathbf{X}}\mathbf{P}_{\mathbf{X}}^T))\odot\mathbf{t}_{\hat{\mathbf{X}}},
	\end{equation}
	where $\odot$ stands for column-wise multiplication. Then $\mathbf{Q}' $ can be further normalized by
	\begin{equation}
			\label{eq:iterupdate3}
		\mathbf{Q} = \mathbf{Q}' \oslash (\mathbf{Q}'\mathbf{e}_n),
	\end{equation}
	where $\mathbf{e}_n = (1,1,...,1)^T$, and where $\oslash$ denotes the row-wise division.
	
	The conditional probability matrix $\mathbf{Q}$ from (\ref{equ:optimize2}) can be expressed analytically if we assume that $t_\mathbf{\hat{x}}$ is known. In our iterative method, we first initialize $\mathbf{t}_{\hat{\mathbf{X}}}$ and choose an $s$. Then, $\mathbf{Q}$ is determined according to (\ref{equ:optimize2}). The marginal codeword distribution is updated by the Bayes' rule
		\begin{equation}
			t^*(\hat{x}_j) = \sum_{i = 1}^n p(x_i)Q(\hat{x}_j|x_i).
		\end{equation}
	The corresponding vectorized representation is 
	\begin{equation}
		\label{equ:optimize1}
		\mathbf{t}_{\hat{x}}^*= \mathbf{QP_X}.
	\end{equation}
	
	We update $\mathbf{Q}$ and $\mathbf{t}_{\hat{\mathbf{X}}}$ according to (\ref{eq:iterupdate2}, \ref{eq:iterupdate3}) and (\ref{equ:optimize1}) iteratively until the algorithm converges. Note that the term $\mathbf{\Gamma} \mathbf{P}_{\mathbf{X}} \mathbf{P}_{\mathbf{X}}^T$ in (\ref{eq:iterupdate2}) is constant, so it can be computed before the iterations.  Finally, we approximate one point of $R_{\text{ID}}^{\text{TC}-\triangle}$ by evaluating $I(\mathbf{P}_{\mathbf{X}}, \mathbf{Q})$ and $D_s$. 
	\begin{exmp}
		\normalfont
		We verify our algorithm by testing it on the binary-Hamming case. Figure \ref{fig:binary} shows that the minimum achievable rates as computed by using the derived algorithm for TC-$\triangle$ schemes is the same as the identification rate $R_{\text{ID}}^*$ of binary sources with Hamming distance. This is consistent with the special case that $R_{\text{ID}}(D_\text{ID}) = R_{\text{ID}}^{\text{TC-}\triangle}(D_\text{ID})$ for the binary-Hamming case.
	\end{exmp}
	\begin{figure}[!ht]
		\centering
		\includegraphics[width=0.5\textwidth]{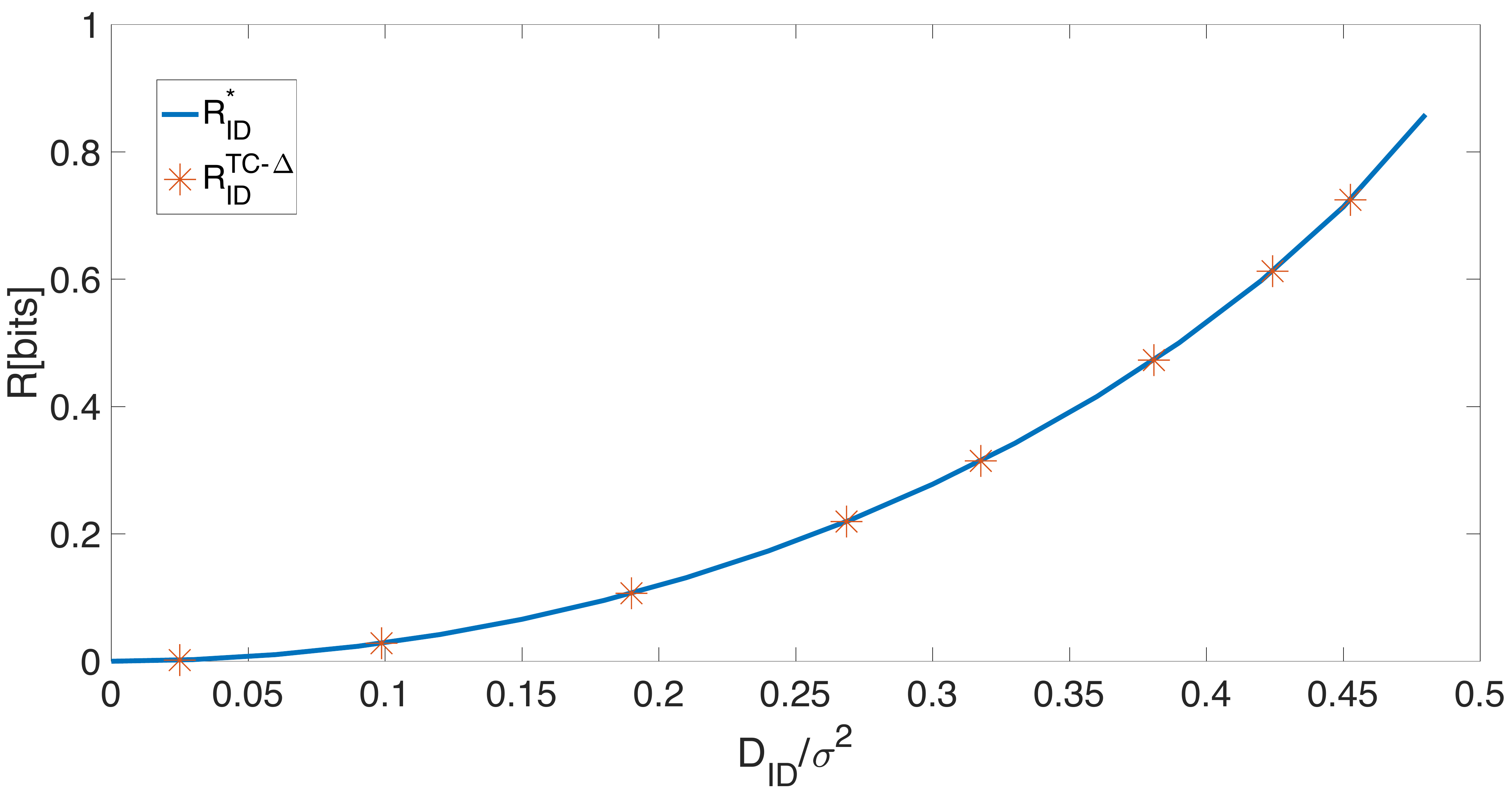}
		\caption{Binary source with Hamming distortion: $P_X = P_Y = \text{Bern}(0.5)$.}
		\label{fig:binary}
	\end{figure}
	\hanwei{
	\subsection{Iterative Method for Component-based Model} 
	\label{sec:comp-tc}
	The optimal rate allocation of the component-based model constructed by $\text{TC-}\triangle$ schemes can be simply achieved by applying the Pareto condition. That is, each component system should operate at the point where all rate-similarity curves of the components have the same slopes
     \begin{equation}
     \frac{\partial J}{\partial D_s^{(i)}} = \frac{\partial J}{\partial D_s^{(j)}} = s, 
     \end{equation}
     where $i, j \in [1,M]$, $s$ is the chosen value of the slope. Then we can obtain the rate-similarity curve of the $M$-component model by traversing the values of $s \in [0, \infty)$ with each component running the iterative method of the TC-$\triangle$ scheme independently for a given $s$.}
 
	\subsection{Comparisons} 
	\label{sec:comparisions}
	In this section, we use the proposed iterative method to approximate the $R_{\text{ID}}^{\text{TC-}\triangle}$ for both i.i.d. and multivariate Gaussian sources, and then compare them with $R_{\text{ID}}^{\text{LC-}\triangle}$ and $R_{\text{ID}}^*$ of optimal schemes. First, we derive the $R_{\text{ID}}^{\text{LC-}\triangle}$ of LC-$\triangle$ schemes for quadratic Gaussian sources by employing a rate-distortion code on the triangle-inequality principle (\ref{equ:tri1}) \cite{amir:dcc:13}. 
	\begin{equation}
		\label{equ:tri1}
		\sqrt{\mathbb{E}_{P_X \times P_{\hat{X}}} \left[(X-\hat{X})^2\right]} \geq \sqrt{\mathbb{E}_{P_{X,\hat{X}}}\left[(X-\hat{X})^2\right]} + \sqrt{D_{\text{ID}}},
	\end{equation}
	Consider a Gaussian source $X\sim \mathcal{N}(0, \sigma^2)$ that is compressed by an optimal rate-distortion code with $(R, D)$. The rate-distortion code can be designed with the codeword distribution $\hat{X}\sim \mathcal{N}(0, \sigma^2-D)$ \cite{Cover2006}, and we have $\mathbb{E}_{P_{X,\hat{X}}}(X-\hat{X})^2 = D$. Hence, from (\ref{equ:tri1}), we can obtain the relation between quantization distortion $D$ and similarity threshold $D_{\text{ID}}$ as
	\begin{equation}
		0 \leq D \leq \frac{1}{2}(2\sigma^2 - \sqrt{D_{\text{ID}}(4\sigma^2-D_{\text{ID}})}).
	\end{equation}
	
	Then, the rate we need to compress the Gaussian source under the maximum distortion constraint is equal to the minimum required $R_{\text{ID}}$ for the LC-$\triangle$ scheme
		\begin{equation}
			\label{eq:ridlc}
			R_{\text{ID}}^{\text{LC-}\triangle}(D_\text{ID}) = \frac{1}{2}\log \frac{\sigma^2}{D} =\frac{1}{2}\log(\frac{1}{1-\sqrt{2\frac{D_{\text{ID}}}{2\sigma^2}-(\frac{D_{\text{ID}}}{2\sigma^2})^2}}).
		\end{equation}
		
		In Figure \ref{fig:rd}, we compare the $R_{\text{ID}}^{\text{LC-}\triangle}$ with the $R_{\text{ID}}^{\text{TC}-\triangle}$ approximation $R_{\text{ID}}^I$ and the $R_{\text{ID}}^*$ for i.i.d. Gaussian sources. It shows that both $\text{LC-}\triangle$ and $\text{TC-}\triangle$ schemes are suboptimal for quadratic Gaussian queries. Although the gap between $R_{\text{ID}}^{\text{LC-}\triangle}$ and $R_{\text{ID}}^I$ is small, we see from the zoomed plot that the $R_{\text{ID}}^I$ is constantly lower than the $R_{\text{ID}}^{\text{LC-}\triangle}$.
		
		\hanwei{In Figure \ref{fig:rd1}, we compare $R_{\text{ID}}^{\text{LC-}\triangle}$, $R_{\text{ID}}^I$, $R_{\text{ID}}^{M*}$ and the approximate minimum achievable rate $R_{\text{ID}}^{\text{IC}}$ of the component-based model constructed by $\text{TC-}\triangle$ schemes for multivariate Gaussian sources with $\rho = 0.7$}. Since there is no simple analytical rate-distortion function for correlated Gaussian sources, the $R_{\text{ID}}^{\text{LC-}\triangle}$ is evaluated by computing a standard BA algorithm. Figure \ref{fig:rd1} shows that both $\text{LC-}\triangle$ and $\text{TC-}\triangle$ (the latter as approximated by $R_{\text{ID}}^I$) are suboptimal when compared to the component-based model with optimal $D_{\text{ID}}^{(m)}$-admissible components. Again, the gap between $R_{\text{ID}}^{\text{LC-}\triangle}$ and $R_{\text{ID}}^I$ is small, and the approximated TC-$\triangle$ scheme constantly performs better than $\text{LC-}\triangle$. The small gap between $R_{\text{ID}}^{\text{LC-}\triangle}$ and $R_{\text{ID}}^{\text{I}}$ results from the codeword update step (\ref{eq:iterupdate2}). Compared to the standard BA algorithm, the proposed iterative method has a constant shift $\mathbf{\Gamma} \mathbf{P}_{\mathbf{X}} \mathbf{P}_{\mathbf{X}}^T$ in the exponential function.
		
		\hanwei{The $R_{\text{ID}}^{\text{IC}}$ can be obtained by the Pareto condition described in Section \ref{sec:TC-comp}.\ref{sec:comp-tc} It shows that the component-based model constructed by $\text{TC-}\triangle$ schemes can achieve better performance than the $\text{TC-}\triangle$ and $\text{LC-}\triangle$ schemes unaided for multivariate Gaussian sources. We should note that this is not a contradiction to the classical rate distortion quantizers where vector quantizers usually perform better due to the space-filling advantage \cite{looka:89:it}. The better performance of the component-based model follows from the reason that the overall similarity threshold is achieved by a specific optimal rate-similarity allocation among components.}
	\begin{figure}[!ht]
		\centering
		\includegraphics[width=0.5\textwidth]{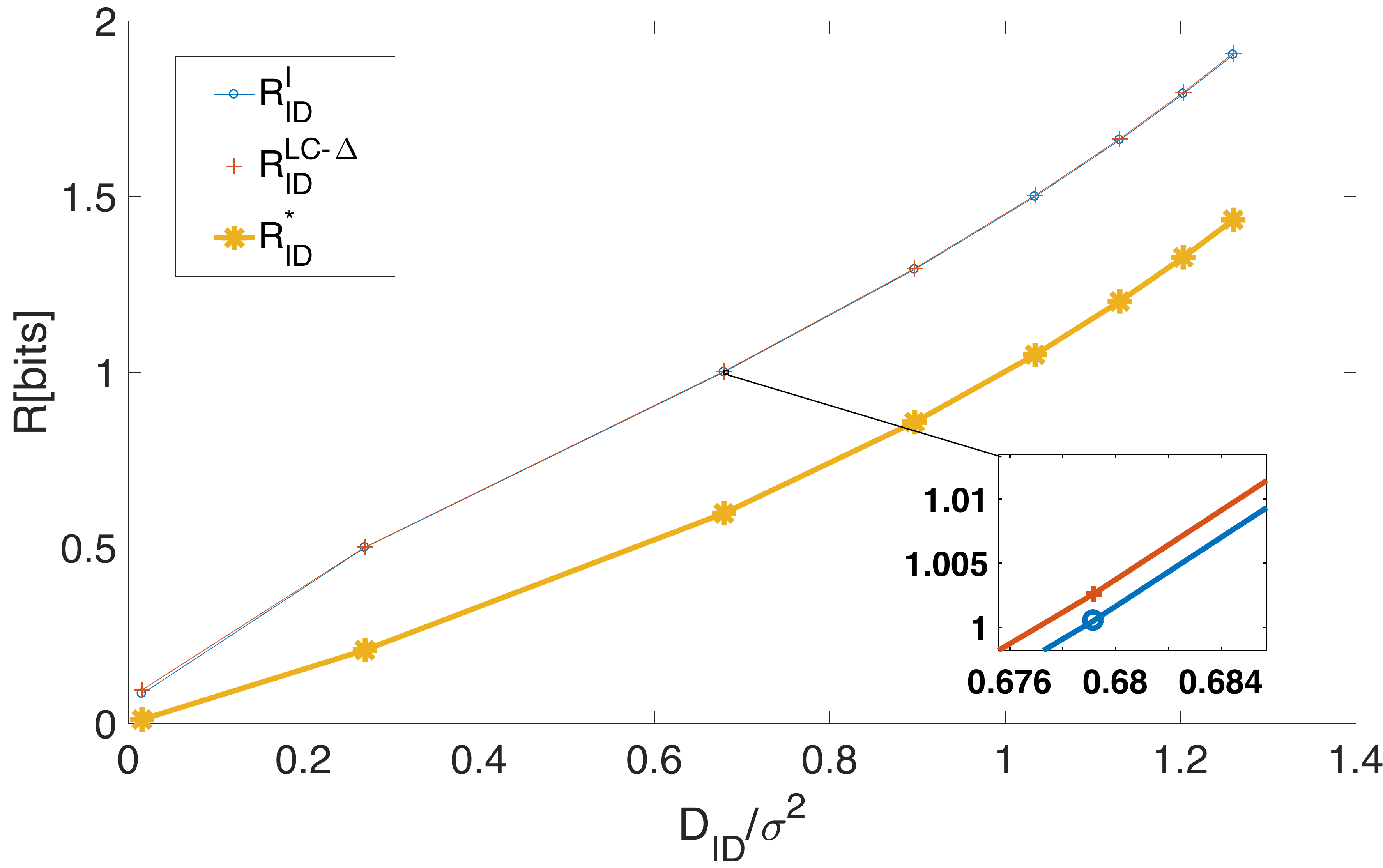}
		\caption{Comparison of $R_{\text{ID}}^{\text{LC-}\triangle}$, $R_{\text{ID}}^I$ and $R_{\text{ID}}^*$ for i.i.d. Gaussian sources.}
		\label{fig:rd}
	\end{figure}
	\begin{figure}[!ht]
		\centering
		\includegraphics[width=0.5\textwidth]{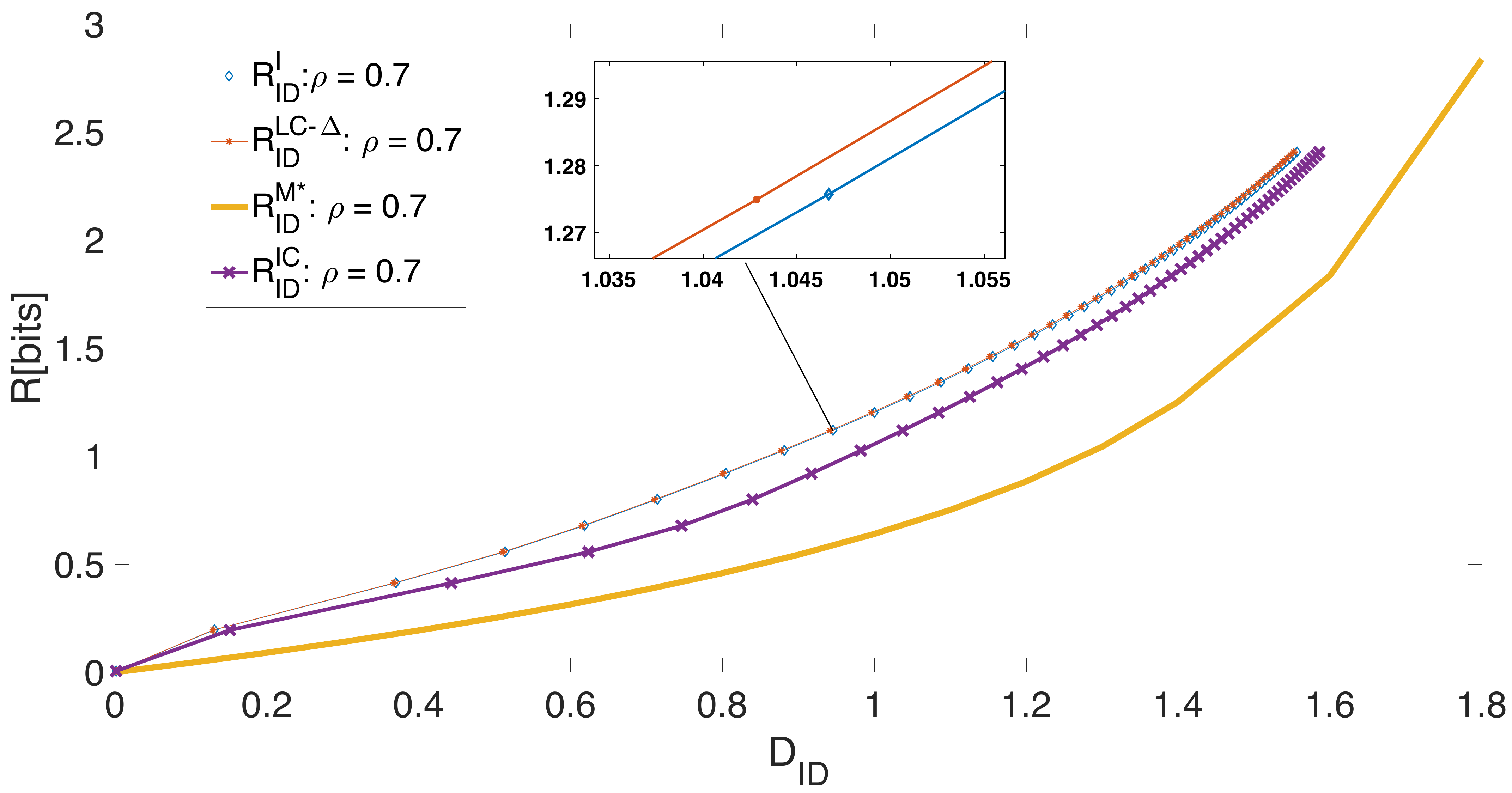}
		\caption{Comparison of $R_{\text{ID}}^{\text{LC-}\triangle}$, $R_{\text{ID}}^I$, $R_{\text{ID}}^{C*}$ and $R_{\text{ID}}^{\text{IC}}$ for multivariate Gaussian sources with $\rho = 0.7$.}
		\label{fig:rd1}
	\end{figure}
	\section{Conclusions}
	In this work, we study the component-based models for correlated similarity queries. We show that the component-based model with KLT transform under the optimal rate-similarity allocation can perfectly represent multivariate Gaussian queries. Hence, we can derive its identification rate based on the model. We then extend the identification rate result of multivariate Gaussian signals to general Gaussian sources with memory and show that it follows a "reverse water-filling" process. Furthermore, we evaluate the performance of the component-based model constructed by TC-$\triangle$ schemes. The simulation shows that our component-based model with TC-$\triangle$ schemes can achieve better performance than the TC-$\triangle$ schemes unaided for multivariate Gaussian sources.
	\bibliographystyle{IEEEtran}
	\bibliography{fine3}
\end{document}